\documentclass[aps,showpacs,amsmath,amssymb,onecolumn,superscriptaddress,notitlepage,longbibliography,preprintnumbers,prx]{revtex4-2}

\usepackage{float}
\usepackage{comment}
\usepackage[T1]{fontenc}
\usepackage[utf8]{inputenc}
\usepackage{lmodern}
\usepackage{amsmath,amssymb,amsthm,mathtools,bm,mathrsfs}
\usepackage{bbm}
\usepackage{microtype}
\usepackage{xcolor}
\usepackage{enumitem}
\usepackage{graphicx}
\usepackage[colorlinks=true, allcolors=blue]{hyperref}
\usepackage[table, dvipsnames]{xcolor} 
\usepackage{comment}

\usepackage{subcaption}
\usepackage{cleveref}
\usepackage{dsfont}
\usepackage{multirow}
\usepackage{booktabs}
\usepackage{orcidlink}
\usepackage[normalem]{ulem}
\usepackage{ragged2e}
\usepackage{calc}
\usepackage{algpseudocode}
\algrenewcommand\algorithmicrequire{\textbf{Input:}}
\algrenewcommand\algorithmicensure{\textbf{Output:}}

\makeatletter
\newcounter{algorithm}

\newenvironment{algorithm}[1][t]
  {%
   \let\c@figure\c@algorithm
   
   \long\def\@makecaption##1##2{%
     \par\noindent{\bfseries ##1}\ ##2\par\nobreak\vspace{3pt}\hrule\vspace{4pt}}%
   \begin{figure}[#1]\hrule\vspace{4pt}\raggedright}
  {\par\vspace{2pt}\hrule\end{figure}}
\makeatother

\newcommand{\ket}[1]{\left|#1\right\rangle}
\newcommand{\bra}[1]{\left\langle#1\right|}
\newcommand{\braket}[2]{\left\langle#1|#2\right\rangle}

\newtheorem{proposition}{Proposition}

\begin{document}

\title{Q-MERGE: Parallelising Quantum State Preparation for Large-Scale Classical Data}
\author{Archie Butterworth}
\affiliation{Centre for Quantum Information, 	
Simulation and Algorithms, \\ 
The University of Western Australia, Perth, Australia}
\author{Jens Renders}
\affiliation{Centre for Quantum Information, 	
Simulation and Algorithms, \\ 
The University of Western Australia, Perth, Australia}
\affiliation{University Paris City and University of Reunion, France}
\author{Jingbo Wang}
\email{jingbo.wang@uwa.edu.au}
\affiliation{Centre for Quantum Information, 	
Simulation and Algorithms, \\ 
The University of Western Australia, Perth, Australia}


\begin{abstract}
Quantum processing of classical data fundamentally relies on efficiently mapping classical datasets onto the amplitudes of quantum states. Preparing large amplitude-encoded states, however, remains a major bottleneck in quantum computing. In this paper, we introduce a novel state-preparation framework we call Q-MERGE, that addresses this bottleneck by partitioning a large target state into $M$ $n$-qubit segments, preparing these segments independently and in parallel, and then coherently combining them into a single amplitude-encoded state using SELECT-SWAP operations and measurement. Q-MERGE is agnostic to the segment-level preparation method, allowing existing techniques to be applied to smaller subproblems while providing a tunable trade-off between circuit depth and qubit count. With mid-circuit measurement and preparation-register reuse, the required ancilla qubits can be reduced from $\mathcal{O}(Mn)$ to $\mathcal{O}(n)$. Applied to a real-world \(32\times1024\) ultrasound dataset, Q-MERGE achieves an infidelity of $3.5 \times10^{-9}$, compared with $6.2 \times 10^{-3}$ for direct preparation using the same underlying method, a more than six orders of magnitude improvement at equal CNOT circuit depth. We demonstrate its feasibility experimentally with a 53 qubit circuit comprised of 954 1-qubit gates and 478 2-qubit gates, on the Quantinuum System Model H2 trapped-ion quantum computer. The prepared state is validated using shadow-overlap tomography. Numerical analysis of Haar-random states indicates consistent success probability for up to $M=10^7$ segments, supporting the scalability of Q-MERGE for encoding massive classical datasets.

\end{abstract}

\maketitle
\section{Introduction}
While the current quantum computing literature contains a remarkable set of algorithms promising quantum advantage over classical algorithms \cite{montanaro2016quantum}, often claiming polylogarithmic run time with respect to the input dimension \cite{harrow2009quantum,lloyd2014quantum,kerenidis2017quantum}, few maintain this advantage without assuming that the input data is efficiently accessible as an amplitude encoded quantum state \cite{aaronson2015read, ciliberto2018quantum, tang2021quantum}. In the worst case, preparing quantum states that represent classical data in its probability amplitudes has exponential elementary gate complexity in the number of qubits \cite{yuan2023optimal, plesch2011quantum,Mottonen2004-mk}, which dominates the end-to-end complexity of these algorithms. This highlights the importance of domain specific optimisation of this bottleneck. Current state preparation techniques aim to exploit sparsity, low-rank, or other structural properties of the input data \cite{gleinig2021efficient, araujo2021low, zylberman2024efficient, gonzalez2024efficient,Grover2002-kj}.

In recent years, tensor networks \cite{orus2014practical, bridgeman2017hand} have been shown to be a promising candidate for constructing state preparation methods. In particular, many methods use the embedding of a tensor network in a quantum circuit as an ansatz in which the parameters describing this circuit can be optimised over, in order to achieve high-fidelity embeddings of a tensor network in a quantum circuit \cite{ran2020encoding, melnikov2023quantum, rudolph2024decomposition, green2025quantum, green2026quantum}. These methods have been shown to be effective in preparing quantum states which display 1D entanglement structure.

However, for large datasets requiring many qubits, or for data containing complex entanglement structure, these optimisation based tensor network approaches to state preparation often struggle to prepare states to high fidelity. The optimisation landscape becomes increasingly complex, and the optimisation over the circuit ansatz will often become trapped in local minima, causing the fidelity to plateau, preventing high-fidelity encodings.

This is in line with the general hardness of quantum circuit optimisation: in general, reducing a state preparation circuit to a sequence of gates with minimal depth, from a given gate set, can be seen as a form of the NP-hard minimum-length generator sequence problem \cite{even1981minimum}. Other relevant optimisation metrics such as T-count and T-depth are also known to be NP-hard in the universal Clifford+T gate set \cite{vandewetering2024optimisingquantumcircuitsgenerally}. These negative results motivate approaches that reduce the problem size by cutting the data into several smaller segments, to be prepared separately. For 2D image data in particular, choosing the rows or columns as the segments not only reduces the size, but also reduces the 2D structure to 1D, enabling the application of tensor networks designed to capture 1D structure.

However, preparing each segment separately on separate registers leads to an encoding with suboptimal space utilisation and reduced entanglement structure, compared to the direct amplitude encoding of the full state, and it discards the relative scale of each segment. On the other hand, preparing the state segment by segment on one register prohibits the parallel preparation of each segment and requires adding multiple control qubits to each of the state preparation circuits. This \textit{controlisation} multiplies the gate count of the preparation circuits, or is even impossible to achieve exactly if no explicit circuit decomposition for the preparation unitaries is known \cite{araujo2014quantum, nakayama2015quantum}.

To overcome these problems, we propose a quantum algorithm called Q-MERGE (Quantum Multi-ReGister Encoding) that compresses a quantum state prepared on separate registers onto a target register, with flexible trade-off between circuit depth and circuit width. If enough qubits are available, our algorithm allows all data segments to be prepared in parallel and then combined into the same state one would have obtained by direct amplitude encoding. At the other extreme, when the available qubits can accommodate only a fixed number of data segments but support mid-circuit measurement, we present a modified but mathematically equivalent protocol that achieves the same compression. This case is of particular interest when the classical segments can be generated on the fly (either by an algorithm, or as a result of measurements), as this allows the encoding of classical data of exponential size, with respect to the size of both the classical and quantum computers involved.



\section{Methods}
A target state $\ket{\psi}$ can be broken down into a set of equally sized, normalised segments $|\psi_i\rangle \in \mathbb{C}^N, i\in \{0,\cdots,M-1\}$, such that 
\begin{equation}
    \ket{\psi}=\sum_{i=0}^{M-1}\sigma_i  \ket{\psi_i} \otimes \ket{i} \in \mathbb{C}^{NM}
\end{equation}
where the weights $\sigma_i \in \mathbb{R}^+$ account for the relative magnitude of each segment in the original state $\ket{\psi}$. Once an efficient state preparation method $U_{\mathrm{prep}}^{(i)}:|0\rangle \mapsto |\psi_i\rangle$ is obtained for each segment of data, it can be applied directly on separate quantum registers to prepare the state
\begin{equation}
|\psi_\text{tensor}\rangle =|\psi_0\rangle \otimes |\psi_1\rangle \otimes \cdots \otimes |\psi_{M-1}\rangle \in \mathbb{C}^{N^M}.
\end{equation}
Without entanglement between the registers, the space utilisation of this state is suboptimal. It occupies a Hilbert space of dimension $N^M$, which is much larger than the dimensionality of the data: $NM$, and requires $M\lceil\log(N)\rceil$ qubits instead of the optimal $\lceil\log(NM)\rceil$. Moreover, $|\psi_\text{tensor}\rangle$ is a non-linear function of the state vectors $|\psi_i\rangle$, which may hinder downstream tasks that benefit from linearity, such as quantum machine learning and quantum linear algebra algorithms.

The first stage of the Q-MERGE algorithm presented in this paper transforms the easily prepared state $|\psi_\text{tensor}\rangle$, with guaranteed success, into the weighted stack
\begin{equation}
|\psi_\text{w}\rangle = \frac{1}{\sqrt{\mathcal{N}}}\sum_{i=0}^{M-1} \tilde{w}_i  |\psi_i\rangle \otimes \ket{i} \in \mathbb{C}^{NM}, \quad\mathcal{N}=\sum_{i=0}^{M-1}|\tilde{w}_i|^2
\end{equation}
where $\tilde{w}_i \in \mathbb{C}$ are returned as a classical output of the algorithm. Like $|\psi\rangle$ itself, the state $|\psi_\text{w}\rangle$ requires only $\lceil\log(NM)\rceil$ qubits and is linear in the state vectors $|\psi_i\rangle$. Our approach avoids the use of sequential controlled $U_{\mathrm{prep}}^{(i)}$ operations, instead applying all $U_{\mathrm{prep}}^{(i)}$ in parallel, using only $\mathcal{O}(M\log{N})$ ancilla qubits, which can be reduced to $\mathcal{O}(\log{N})$ by trading in parallelism, providing a highly flexible and scalable approach to state preparation.

Additionally, the second stage of Q-MERGE uses the classically known weights $w_i$  to recover the original state $\ket{\psi}$, through the application of a block-encoded diagonal matrix. We analyse the success probability of post-selecting the desired block, which depends on the distribution of the weights, which in turn depend on the distribution of the encoded data. In case of data with high variance in the amplitudes, this success probability can become prohibitively small. This issue is overcome by a classical preconditioning procedure, moving this variance into the phases, at the cost of a shallow layer of 2 qubit gates in the quantum circuit.

Note that the derivation of Q-MERGE below assumes that the segment level preparation circuits are exact. If instead, the preparation circuits $U_\text{prep}^{(i)}$ prepare approximations $\ket{\tilde{\psi}_i}$ of the segments, Q-MERGE will prepare
\begin{equation}
    \ket{\tilde \psi} = \sum_{i=0}^{M-1}\sigma_i\ket{\tilde{\psi_i}}\otimes\ket{i} .
\end{equation}
The global phase of each $U_\text{prep}^{(i)}$ can be chosen such that $\braket{\tilde{\psi}_i}{\psi_i} \in \mathbb{R}^+$, and in that case the fidelity of $\ket{\tilde\psi}$ to $\ket{\psi}$ can be expressed in terms the segment level fidelity $F_i = |\braket{\tilde\psi_i}{\psi_i}|^2$ as
\begin{equation}
  F = |\braket{\tilde\psi}{\psi}|^2 = \left|\sum_{i=0}^{M-1}\sigma_i^2 \braket{\tilde\psi_i}{\psi_i}\right|^2=  \left(\sum_{i=0}^{M-1}\sigma_i^2 \sqrt{F_i}\right)^2 .
  \label{eq:multiregister fidelity}
\end{equation}

\subsection{Q-MERGE Stage 1: Preparation and compression}
\label{sec:stage1}

\begin{algorithm}[H]
\caption{Q-MERGE Stage 1: parallel preparation and SELECT-SWAP compression, using one preparation
register per segment (Fig.~\ref{fig:naivecircuit}). The
partition of $\ket{\psi}$ and th e synthesis of each $U_{\mathrm{prep}}^{(i)}$ are
classical preprocessing, carried out beforehand.}
\label{alg:stage1}
\begin{algorithmic}[1]
\Require Normalised segments $\{\ket{\psi_i}\in\mathbb{C}^{N}\}_{i=0}^{M-1}$, zero-padded
         to length $N$ if required, from the partition
         $\ket{\psi}=\sum_i\sigma_i\ket{\psi_i}\otimes\ket{i}$ with $M=2^m$, $N=2^n$;
         preparation circuits
         $U_{\mathrm{prep}}^{(i)}:\ket{0}^{\otimes n}\mapsto\ket{\psi_i}$.
\Ensure  Weighted stack $\ket{\psi_\textbf{w}}$ on registers $SI$, and the measurement
         record $\vec{u}=(u_0,\dots,u_{M-1})$.
\Statex
\State Initialise $S\gets\ket{+}^{\otimes n}$ and $I\gets\ket{+}^{\otimes m}$.
\For{$i=0$ \textbf{to} $M-1$ \textbf{in parallel}}
    \State Apply $U_{\mathrm{prep}}^{(i)}$ to preparation register $A_i$.
\EndFor
\For{$i=0$ \textbf{to} $M-1$}
    \State Apply SELECT-SWAP between $A_i$ and $S$, controlled on $\ket{i}_I$.
\EndFor
\State Measure every $A_i$ in the computational basis and record the outcomes $\vec{u}$.
\State \Return $\ket{\psi_\textbf{w}}
       = \frac{1}{\sqrt{\mathcal{N}}}\sum_{i=0}^{M-1}\tilde{w}_i\ket{\psi_i}_S\ket{i}_I$
       and $\vec{u}$, where $\tilde{w}_i=\prod_{k\neq i}(\psi_k)_{u_k}$ and
       $\mathcal{N}=\sum_i|\tilde{w}_i|^2$.
\end{algorithmic}
\end{algorithm}
To encode a target state $\ket{\psi} \in \mathbb{C}^{MN}$, it is partitioned into $M$ smaller segments as
\begin{equation}
    \ket{\psi}=\sum_{i=0}^{M-1}\sigma_i  \ket{\psi_i} \otimes \ket{i}  \qquad \ket{\psi_i}\in \mathbb{C}^N \qquad \sigma_i\in\mathbb{R}^+
\end{equation}
The choice of partitioning should be made such that each $\ket{\psi_i}$ has a favourable state preparation circuit $U_{\mathrm{prep}}^{(i)}$, with consideration given to the number of qubits and therefore the number of partitions $M$ that the hardware allows. Given these $M = 2^m$ state vectors $\ket{\psi_i}=\sum_{k=0}^{N-1}(\psi_i)_k \ket{k}\in \mathbb{C}^N$ with $N = 2^n$,
the protocol uses the following resources:
\begin{center}
\begin{tabular}{lll}
\toprule
Register & Qubits & Role \\
\midrule
Preparation registers \{$A_0, \dots, A_{M-1}$\}  & $Mn$ & Prepare and hold each $\ket{\psi_i}$\\
Storage $S$ & $n$ & The register in which each vector is loaded/swapped into\\
Index $I$ & $m = \log_2 M$ & Index each loaded vector\\
\bottomrule
\end{tabular}
\end{center}

The initial states $\ket{\psi_i}$ are prepared on each of the preparation registers using their respective $\{U_{\mathrm{prep}}^{(i)}\}_{i=0}^{M-1}$ circuits, and the storage $S$ and index register $I$ are prepared in equal superposition of all states:
\begin{equation}
  |\psi'\rangle
  = \bigotimes_{i=0}^{M-1} \ket{\psi_i}_{A_i}
  \otimes |+\rangle^{\otimes n}_S \otimes |+\rangle^{\otimes m}_I.
\end{equation}
We then apply SELECT-SWAP unitary gates
$\sum_i |i\rangle\langle i|_I \otimes \mathrm{SWAP}(A_i, S)$
 to enact swaps between the storage register $S$ and the preparation register $A_i$, controlled on the index register state $\ket{i}$. This creates the state:
\begin{equation}
  |\psi''\rangle
  = \frac{1}{\sqrt{M}}
    \sum_{i=0}^{M-1}
      \ket{\tilde{w}_i}
      |\psi_i\rangle_S |i\rangle_I, \qquad \ket{\tilde{w}_i}=|+\rangle^{\otimes n}_{A_i}
      \Big(\bigotimes_{k \neq i} |\psi_k\rangle_{A_k}\Big) 
  \label{eq:entangled-state}
\end{equation}
Measuring all $M$ preparation registers in the computational basis, and obtaining outcome $\vec{u} = (u_0, \dots, u_{M-1})$ , $u_i \in \{0, 1,\dots ,N-1\}$, collapses the state onto the storage and index register resulting in the state
\begin{equation}
    \ket{\psi_\text{w}}= \frac{1}{\sqrt{\mathcal{N}}}
    \sum_{i=0}^{M-1} \tilde{w}_i \, \ket{\psi_i}_S |i\rangle_I,
  \qquad
  \tilde{w}_i = \prod_{k \neq i} (\psi_k)_{u_k},
  \qquad
  \mathcal{N} = \sum_{i=0}^{M-1} |\tilde{w}_i|^2.
\end{equation}
Expressing in terms of the normalised weights $w$ gives
\begin{equation}
    \ket{\psi_\text{w}}= 
    \sum_{i=0}^{M-1} w_i \, \ket{\psi_i}_S |i\rangle_I,
  \qquad
  w_i = \frac{\tilde{w}_i}{\sqrt{\mathcal{N}}}\label{eq:output_after_measure_un}
\end{equation}
which we refer to as the ``weighted stack state''.
This process (Q-MERGE Stage 1) of preparing each individual $\ket{\psi_i}$ onto the preparation registers $A_i$, and subsequently swapping them into the storage register $S$, and finally measuring on all preparation registers to generate $\ket{\psi_\textbf{w}}$ is shown in Fig. \ref{fig:naivecircuit}.
\begin{figure}
    \centering
    \includegraphics[]{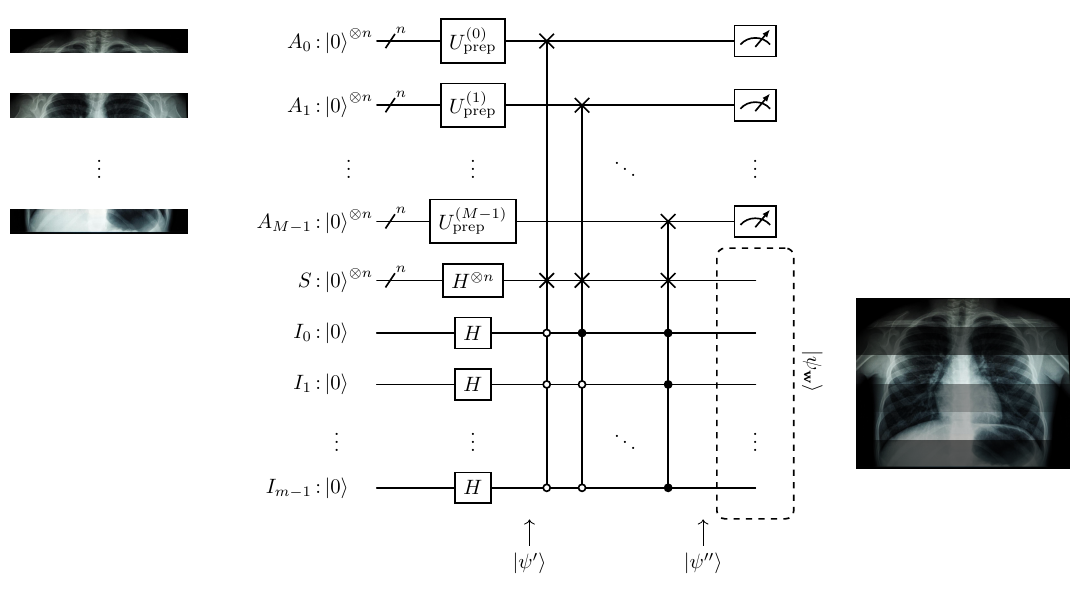}
    \caption{Q-MERGE Stage 1 circuit with $M$ preparation registers. The chest x-ray illustrates the segments prepared on the preparation registers and the resulting weighted stack on the storage and index register. The random weights can be removed by stage 2 of Q-MERGE}
    \label{fig:naivecircuit}
\end{figure}

The coefficient of branch $k$ is the product of the $u_k$-th entries of all \emph{other} input vectors. The $\ket{\psi_\text{w}}$ state contains all the information of the state vector encoding, up to relative phase and scale of the segments. Depending on the application, the weights $w_i$ can be taken into account or ignored in downstream tasks, ending the preparation procedure here. For applications where this is not feasible, the weights can be corrected by stage 2 of Q-MERGE.

\subsection{Q-MERGE Stage 2: Weighting Redistribution}
\begin{algorithm}[ht]
\caption{Q-MERGE Stage 2: weighting redistribution. Restores the original segment weights
$\sigma_i$, succeeding with probability $1/\beta^2$, which is known from Algorithm \ref{alg:stage1}.}
\label{alg:stage2}
\begin{algorithmic}[1]
\Require $\ket{\psi_\textbf{w}}$ and the record $\vec{u}$ from Algorithm~\ref{alg:stage1};
         target weights $\{\sigma_i\}$; flattening circuit $V$, or $V=I_{SI}$ if no
         classical preconditioning was used.
\Ensure  The target state $\ket{\psi}$ on $S\otimes I$.
\Statex
\State Compute $w_i$,
       $\alpha_i=\,\sigma_i/w_i$ and $\beta=\max_i|\alpha_i|$
       classically.
\State Append an ancilla qubit $Q$ in $\ket{0}$.
\State Apply a $(\beta,1,0)$-block encoding of the diagonal reweighting operator
       $D=\sum_{i=0}^{M-1}\alpha_i\,I_S\ket{i}\bra{i}$ to $S\otimes I$, with
       $Q$ as the block-encoding ancilla.
\State Measure $Q$.
\If{the outcome is not $\ket{0}$}
    \State Discard and restart from Algorithm~\ref{alg:stage1}.
\EndIf
\State Apply $V^{\dagger}$ to $S\otimes I$.
       
\State \Return $\ket{\psi}=\sum_{i=0}^{M-1}\sigma_i\ket{\psi_i}_S\ket{i}_I$.
\end{algorithmic}
\end{algorithm}
The first stage of Q-MERGE is able to prepare the weighted stack
\begin{equation}
  \ket{\psi_\text{w}}= \sum_{i=0}^{M-1}w_i  \ket{\psi_i}_S \otimes \ket{i}_I
\end{equation}
with guaranteed success. However, for applications where this state is not a useful encoding of the data, and the unmodified state
\begin{equation}
  \ket{\psi}=\sum_{i=0}^{M-1}\sigma_i \ket{\psi_i} \otimes \ket{i}
\end{equation}
with the original weighting needs to be prepared, we present a process to restore each $\sigma_i$ to the segments, present in the original data, from the weighted stack state.

All weights $w_i$ are fully determined by the classical measurement $\vec{u}$ and the known input data, so the required corrections to the weights are classically computable once the measurements of \textit{all} preparation registers have been made. This reweighting map on register $SI$ can be expressed as a diagonal operator:
\begin{equation}
    D = \sum_{i=0}^{M-1}\alpha_iI_S\ket{i}\bra{i},
    \qquad
    \frac{D\ket{\psi_\text{w}}}{\lVert D\ket{\psi_\text{w}}\rVert} = \ket{\psi}
\end{equation}
with $\alpha_i = \frac{\sigma_i}{w_i}$. A diagonal operator can be efficiently implemented in a quantum circuit as a $(\beta, 1, 0)$-block encoding, with $\beta = \max \{|\alpha_i|:i=0,\cdots,M-1\}$ through index controlled rotations on a single ancilla qubit $Q$ \cite{takahira2021quantum}. In this case, $M$ controlled rotations are sufficient. The block encoding is described in more detail in Sec. \ref{sec:block encoding}. The post-selection success probability of selecting the desired block is given by
\begin{equation}
    P\left(\ket{0}_Q\right) = \left\lVert \frac{D}{\beta} \ket{\psi_\text{w}} \right\rVert^2 = \left\lVert \frac{1}{\beta}\ket{\psi} \right\rVert^2 = \frac{1}{\beta^2}.
\end{equation}

\subsection{Optimising success probability via classical preconditioning} 
\label{sec:classical preconditioning}
The success probability of Q-MERGE is determined by $\beta$, the magnitude of the reweighting. If $\ket{\psi}$ contains many small amplitudes, the sampling of a small $w_i$ is likely. If additionally, $\ket{\psi_i}$ still has a large contribution $\sigma_i$ in $\ket{\psi}$, this requires a large reweighting, which negatively influences $\beta$ and the success probability.

A key observation leading to the resolution of this issue is that the action of only a few \textit{brickwall} layers of SU(4) gates acting on a state can dramatically change the amplitudes of a state vector, where the majority of the complex entanglement structure remains stored in the phases of the basis states rather than the amplitudes. Our past work \cite{Butterworth2026-rb} has shown that a quantity characterising the distribution of probability amplitudes, known as the inverse participation ratio (IPR), can be dramatically minimised in this efficient way, which will allow us to boost the success probability as described below. Once this ``flattened'' state is prepared, we can easily apply our shallow flattening procedure in reverse to restore the desired state. 

For a normalised state $\ket{\psi}=\sum_{k=0}^{D-1}(\psi)_k\ket{k}\in\mathbb{C}^D$, the inverse participation ratio (IPR) is defined as
\begin{equation}
    \mathrm{IPR}(\psi) = \sum_{k=0}^{D-1} \bigl|(\psi)_k\bigr|^4,
    \label{eq:ipr}
\end{equation}
which is bounded as $1/D \leq \mathrm{IPR}(\psi) \leq 1$: the lower bound is saturated by the uniform state $(\phi)_k=1/\sqrt{D}$, and the upper bound by any computational basis state. A smaller $\mathrm{IPR}$ therefore corresponds to a flatter amplitude distribution. Given that we wish to prepare our target state $\ket{\psi}$  on an $(m+n)$-qubit register $S\otimes I$ of dimension $MN = 2^{m+n}$, we can form a shallow ``flattening circuit'' $V\in SU(MN)$ that will change the possibly ill-conditioned $\ket{\psi}$ into a state $V\ket{\psi}=\ket{\tilde{\psi}}$ with a lower IPR and therefore a more favourable expected reweighting success probability.  Sec. \ref{sec:minimum IPR} shows that a state with minimal IPR has a reweighting success probability of exactly 1, motivating the choice of IPR as a proxy for the true reweighting success. The state $\ket{\tilde{\psi}}$ can then be prepared with this high success probability, followed by an additional final step of applying $V^{\dagger}$ to restore $\ket{\psi}$. 

We take $V$ to be $L$ layers of \textit{brickwall} two-qubit SU(4) gates acting on the $S \otimes I$ register. Labelling the $m+n$ qubits $q_0,\dots,q_{m+n-1}$, the $L$-layers of the brickwall circuit is
\begin{equation}
    V =
    \prod_{\ell=0}^{L-1}
    \left[
        \prod_{\substack{j=0\\j\equiv \ell\;(\mathrm{mod}\;2)}}^{m+n-2}
        V_{\ell,j}(q_j,q_{j+1})
    \right],
    \qquad
    V_{\ell,j}\in SU(4),
    \label{eq:brickwall}
\end{equation}
where each layer consists of two-qubit gates acting on alternating nearest-neighbour pairs. The gates within each layer act on disjoint qubit pairs and can therefore be applied in parallel, giving a circuit depth of $L$. Each $V_{\ell,j}$ is parametrised by $15$ real parameters (the generators of $\mathfrak{su}(4)$), which are optimised classically to minimise $\mathrm{IPR}\bigl(V\ket{\psi}\bigr)$ as in Eq.~\eqref{eq:ipr}. Since $V$ is unitary by construction, the same staircase structure in reverse with each $V_{\ell,j}$ replaced by $V_{\ell,j}^\dagger$ exactly undoes the transformation once the flattened state $\ket{\tilde{\psi}}$ has been prepared on separate registers, undergone the SELECT-SWAP and measurement process and reweighted, completely bypassing the difficulty of reweighting $\ket{\psi}$.
\section{Complexity Analysis}
The analysis of the Q-MERGE framework presented in Fig. \ref{fig:naivecircuit} assumes that each segment of the data can be efficiently prepared, which will always be possible by taking $M$ large enough. The validity of this claim is supported in Sec. \ref{sec:numerical results}, where orders-of-magnitude improvement over direct encoding is observed.

\subsection{Parallel Qubit Circuit Design}

The depth of the circuit on the main register (storage $S$ and index $I$), is purely controlled by the depth induced by the multi controlled SELECT-SWAP gates, with SELECT-SWAP denoting a SWAP between two registers of size $n$, conditional on $m$ qubits. $M$ of these SELECT-SWAP operations are required, one for each register $A_i$ that needs to be loaded into the storage register $S$. As noted in Sec. \ref{sec:swapdecomp}, each of these can be decomposed into SWAP and $\mathrm{C}^{m}\mathrm{NOT}$ gates both of which efficient decompositions exist, \cite{Claudon2024-ry}, leading to a total circuit depth of $\mathcal{O}(M\log^3{(m)}+Mn)$ for the whole process on the main registers.

The depth of each preparation register does not impact the depth of the circuit on the main register. One should prepare the state $\ket{\psi_i}$ on the preparation register $A_i$ immediately before the SELECT-SWAP to the main register occurs, meaning that on each preparation register the depth is only as deep as $U_{\mathrm{prep}}^{(i)}$, with $\mathcal{O}(n)$ CSWAP gates as outlined in Sec. \ref{sec:swapdecomp}. 

Before Q-MERGE acts on the initial state prepared by the  $U_{\mathrm{prep}}^{(i)}$, we are storing the state information on $Mn + m + n$ qubits ($Mn$ for the preparation registers, $m$ for the index register and $n$ for the storage register), leading to a $N^M N M$ total dimension of the full Hilbert space, but only $MN$ dimensions of this are used to store the information of the desired state. After the compression we can discard the preparation registers to leave $n + m$ qubits with the resulting state stored on the full $NM$ available dimensions in the Hilbert space of the reduced system.

The framework described uses $\mathcal{O}(Mn)$ qubits, a trade-off between circuit depth and number of qubits. By allowing mid-circuit measurement and the reuse of the preparation registers, we can reduce this to $\mathcal{O}(n)$ number of qubits needed, while keeping the $\mathcal{O}(M\log^3{(m)}+Mn)$ circuit depth and equivalent circuit architecture.

\subsection{Compact Qubit Circuit Design}
\label{sec:alt design}
Allowing for mid-circuit measurement and re-use of the preparation registers can reduce the total qubit count from $\mathcal{O}(Mn)$ to $\mathcal{O}(n)$. Essentially, a fixed number of preparation registers are chosen, say $\tilde{M} < M$, which are reused after each measurement. This follows from the observation that, in the above framework, the order in which the state preparation and measurement occurs does not matter: measuring a preparation register immediately after it has been acted on by the SELECT-SWAP is equivalent to measuring all registers at the end of the circuit. Therefore the register can be reused, if hardware allows, to perform the preparation of subsequent segments once reset after measurement.

The alternative circuit design is equivalent to Fig.~\ref{fig:naivecircuit}, except it has $\tilde{M}$ as a constant number of preparation registers used to encode $M$ segments of data, instead of each segment having its own register. The corresponding algorithm for this is shown in Alg.~\ref{alg:registerreuse}, and the circuit in Fig.~\ref{fig:condensedcircuit}. Note that in this circuit design, initialisation back to $\ket{0}^{\otimes n}$ and preparation through $U_{\mathrm{prep}}^{(i)}$ is done immediately after mid-circuit measurement has occurred. This way, the main registers SI do not spend unnecessary time idle while state preparation on the registers could be done.
\begin{algorithm}[t]
\caption{Q-MERGE Stage 1 (alternative): preparation register reuse with mid-circuit measurement and reset.}
\label{alg:registerreuse}
\begin{algorithmic}[1]
\Require $\tilde{M}<M$ preparation registers, normalised segments
         $\{\ket{\psi_i}\}_{i=0}^{M-1}$, and preparation circuits
         $U_{\mathrm{prep}}^{(i)}:\ket{0}^{\otimes n}\mapsto\ket{\psi_i}$.
\Ensure Weighted stack $\ket{\psi_\textbf{w}}$ on $S\otimes I$, with measurement
        outcomes recorded as each preparation register is reused.
\Statex
\State Initialise $S\gets\ket{+}^{\otimes n}$ and $I\gets\ket{+}^{\otimes m}$.
\For{$i=0$ \textbf{to} $M-1$ \textbf{pipelined}}
    \State Select an available preparation register $A_j$, where
           $j=i\bmod\tilde{M}$.
    \State Apply $U_{\mathrm{prep}}^{(i)}$ to $A_j$.
    \State Apply $C^{m}\mathrm{SWAP}_{n}$ between $A_j$ and $S$,
           controlled on $\ket{i}_I$.
    \State Measure $A_j$ in the computational basis and record the outcome $u_i$.
    \State Reset $A_j\gets\ket{0}^{\otimes n}$ for reuse.
\EndFor
\State \Return $\ket{\psi_\textbf{w}}$ on $S\otimes I$ and the measurement
       record $\vec{u}=(u_0,\ldots,u_{M-1})$.
\end{algorithmic}
\end{algorithm}

Depending on the number of qubits available, $\tilde{M}$ would be chosen such that this process can be done to ensure there is no wait time between the main register and the preparation registers. The state preparation framework would be pipelined, where segments $\ket{\psi_i}$ of the target state $\ket{\psi}$ are all simultaneously being prepared while one $\ket{\psi_i}$ is being swapped into the main registers until the full state is prepared. This achieves the exact same $\mathcal{O}(M\log{(m)}^3+Mn)$ circuit depth as in Alg.~\ref{alg:stage1}, with no impact to the operation or run-time on the main registers.

\begin{figure}
    \noindent
    \hspace*{-1cm}%
    \includegraphics[scale=0.82]{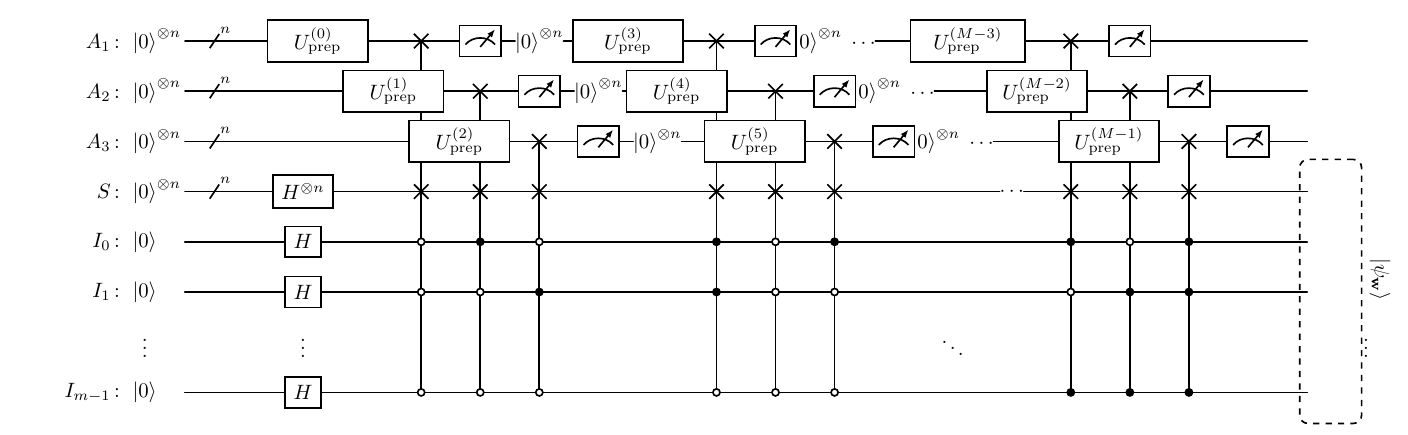}
    \caption{Pipelined Q-MERGE circuit with a fixed number, $\tilde{M}=3$ , of preparation registers.}
    \label{fig:condensedcircuit}
\end{figure}

\section{Results}
\subsection{Numerical Benchmarking}
\label{sec:numerical results}
\begin{figure}[t]
    \centering
    \includegraphics[width=1\linewidth]{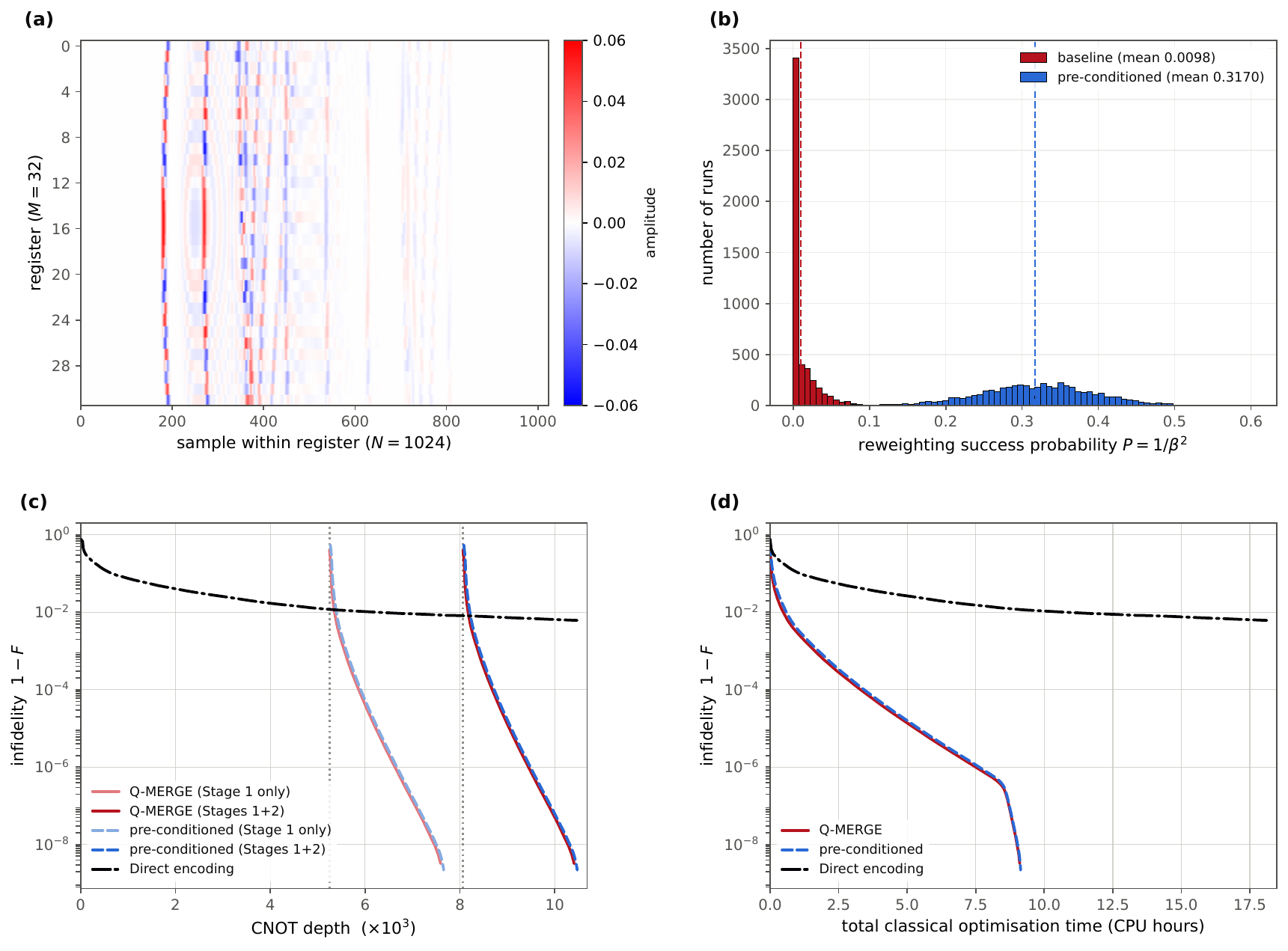}
    \caption{\textbf{(a)} The ultrasound dataset. \textbf{(b)} Distribution of the Stage-2 reweighting success probability of Q-MERGE $P = 1/\beta^2$ over 5000 simulated Stage-1 measurement outcomes for the ultrasound data. Three brickwall layers of preconditioning raise the mean from $P = 0.0098$ to $0.317$, for an added CNOT depth of only 18 gates. Dashed lines mark the means. \textbf{(c)} End-to-end CNOT circuit depth from encoding the ultrasound data with Q-MERGE, preconditioned Q-MERGE and direct encoding, using the same underlying segment-level preparation (SSO). Both the depth of the Q-MERGE (Stage 1 only) and the full Q-MERGE procedure are shown. Each curve is traced out by increasing the number of SSO layers: applied to every segment in parallel for Q-MERGE, and to the single $n+m = 15$ qubit circuit for direct encoding. Q-MERGE carries a fixed depth cost (vertical grey dotted line), on top of which the segment-level preparation depth accumulates. The 3 brickwall preconditioning layers add only a depth of 18 CNOT to this fixed cost and are not resolvable on this scale. \textbf{(d)} Total classical optimisation time to generate the state-preparation circuit.}
    \label{fig:numresults}
\end{figure}
We now show the end-to-end process of using Q-MERGE to encode a sample of ultrasound data. This target state forms a 2D array of size $32 \times 1024$, and represents the type of data structures needed to be prepared for quantum-assisted medical imaging applications, with the dataset shown in Fig. \ref{fig:numresults}.

As laid out in Sec. \ref{sec:stage1}, a choice was made in partitioning the 2D array. A natural choice for this is partitioning the data up into its $M=32$ rows, each of length $N=1024$, as each row represents smooth time series data from individual ultrasound detection elements. Each row was prepared using the matrix product state preparation method of \citeauthor{green2026quantum} \cite{green2026quantum} known as \textit{Schmidt Spectrum Optimisation} (SSO), with $200$ L-BFGS-B iterations per staircase layer of SU(4) gates. The aggregate fidelity is computed from the segment level fidelity by Eq.~\eqref{eq:multiregister fidelity}.

This approach achieved an infidelity of $3.5 \times10^{-9}$, more than six orders of magnitude better than directly encoding the target state $\ket{\psi}$ on the SI registers under the same method, for the same CNOT circuit depth of $1.04\times 10^4$. The infidelity of direct encoding and Q-MERGE with SSO is shown in Fig. \ref{fig:numresults}, compared to the CNOT circuit depth and classical optimisation time to generate the SSO layers. Note that the per-layer optimisation cost falls as the encoding improves, as high fidelity states produced by SSO are able to be represented by an MPS with low bond dimension. The CNOT circuit depth includes the full SELECT-SWAP network and reweighting operations, and the optimisation time quoted represents time on a single consumer laptop (16-core Intel Core Ultra X7, 32 GB RAM), summed over all $M$ segments fitted sequentially. In practice the classical optimisation could be done in parallel as each segment level preparation is independent, and use of super computing resources would reduce the optimisation time from hours to minutes.

Additionally, Stage 2 of Q-MERGE was simulated: the measurement of the circuit's preparation registers was classically simulated as described in Sec. \ref{sec:classical-sim} in order to produce the weighted stack state of the preparation states, from which the success probability of the reweighting can be calculated. As noted in \ref{sec:classical preconditioning}, states in which each segment of the state contains a large number of small amplitudes may lead to ill-conditioned weights $w_i$, and a prohibitively small success probability for reconstructing the original state vector. This is indeed the case with this target state, leading to a reweighting success probability of only $P=$ 0.0098, before the preconditioning step was applied. The classical preconditioning step $V$ boosted this to $P=0.3170$, at the cost of only 3 brickwall layers of SU(4) gates, highlighting the importance of the classical preconditioning in recovery of the desired state. The introduction of the preconditioning did not degrade the infidelity of the state preparation, with the preconditioned state achieving an infidelity of $2.2\times10^{-9}$.

\subsection{Quantinuum System Model H2 Hardware Implementation}
\label{sec:hardware}
We implemented stage 1 of Q-MERGE on the Quantinuum System Model H2 trapped-ion quantum computer. We encode $M=4$ rows of length $N=2^{10}=1024$ from the ultrasound data, creating a circuit on $53$ qubits, (using $4$ preparation registers of $10$ qubits, a storage register of $10$ qubits, $2$ index qubits and $1$ ancilla). Each $U_{\mathrm{prep}}^{(i)}$ is intentionally kept shallow, at 1 layer of SSO state preparation, so that the total gate count and noise contribution for this hardware experiment consists predominantly of the Q-MERGE protocol rather than the underlying state preparation subroutine. The goal of this experiment is to verify that Q-MERGE has successfully merged the $4$ individually prepared $10$ qubit segments into the desired weighted stack on $12$ qubits of the index and storage register. To quantify the fidelity of the prepared weighted stack to the desired weighted stack, a quantum state certification algorithm known as \textit{shadow overlap certification} \cite{Huang2024-jo} was used. However, since each shot prepares a weighted stack with different randomised weights, we cannot certify the whole weighted stack directly. Instead, we apply the shadow overlap technique on each branch of the weighted stack individually. The weighted stack for our state is 
\begin{equation}
  \ket{\tilde\psi_\text{w}}= \sum_{i=0}^{3}w_i  \ket{\tilde\psi_i}_S \otimes \ket{i}_I
\end{equation}
Measuring the index register as $\ket{i}_I$ will result in the storage register S collapsing into one of the $4$ different $\ket{\tilde{\psi}_i}$. We are able to determine that the compressed state on the storage and index register contains all the information (up to scale) of the segments by certifying that the correct $\ket{\tilde{\psi}_i}$ is prepared on the storage register. The certification procedure works by estimating an efficiently measurable local observable that acts as a proxy for the fidelity known as the shadow overlap $\omega$. The algorithm for calculating the shadow overlap is explained in detail in Sec. \ref{sec:shadow overlap}. Fig. \ref{fig:hardware_results_52q} shows the resulting histograms displaying the shadow overlaps obtained from the theoretical noise-free simulation and the H2-1 quantum hardware. 

From this we form an estimate $\tilde F_i$ of the fidelity to the desired $\ket{\tilde{\psi}_i}$ through inverse linear regression. We uniformly sample fidelities between 0 and 1, construct a random state at each, and compute its shadow overlap exactly, to construct a calibration curve, over which the shadow overlap and fidelity are strongly correlated (Pearson $r = 0.9996$). This process is given in more detail in Sec. \ref{sec:inverse-regression}. The results are shown in Fig. \ref{fig:hardware_results_52q}, giving the inferred fidelity for each state as shown in Tab. \ref{tab:shadow_overlap_52q}.

\begin{table}[ht]
    \centering
    \caption{Per-branch shadow overlap and estimated fidelity $\tilde F_i$ for the four
encoded rows to $\ket{\tilde\psi_i}$ on the Quantinuum System Model H2, from $450$ shots
($150$ each of $X$, $Y$, $Z$) conditioned on the index register, for the upscaled ($N=2^{10}$) rows of ultrasound data.}
    \label{tab:shadow_overlap_52q}
    \begin{tabular}{c c c c}
        \toprule
        Row $i$ & Measured $\omega$ & Fidelity $F$ & $90\%$ interval \\
        \midrule
        0 & $0.795 \pm 0.076$ & $0.589$ & $[0.339,\,0.839]$ \\
        1 & $0.895 \pm 0.068$ & $0.790$ & $[0.567,\,1.000]$ \\
        2 & $0.860 \pm 0.075$ & $0.721$ & $[0.474,\,0.967]$ \\
        3 & $0.860 \pm 0.071$ & $0.720$ & $[0.492,\,0.949]$ \\
        \bottomrule
    \end{tabular}
\end{table}

The weighted average estimated fidelity $\tilde F=\sum_i\sigma_i^2\tilde F_i$, with respect to the weights of the original data,
\begin{equation}
    \sigma_i^2=\{0.210194,\,0.296888,\,0.249177,\,0.243741\}, 
\end{equation}
is $\tilde F = 0.7135\pm0.0720$, demonstrating the viability of Q-MERGE on the Quantinuum System Model H2.

\begin{figure}[H]
    \centering
    \includegraphics[width=1\linewidth]%
    {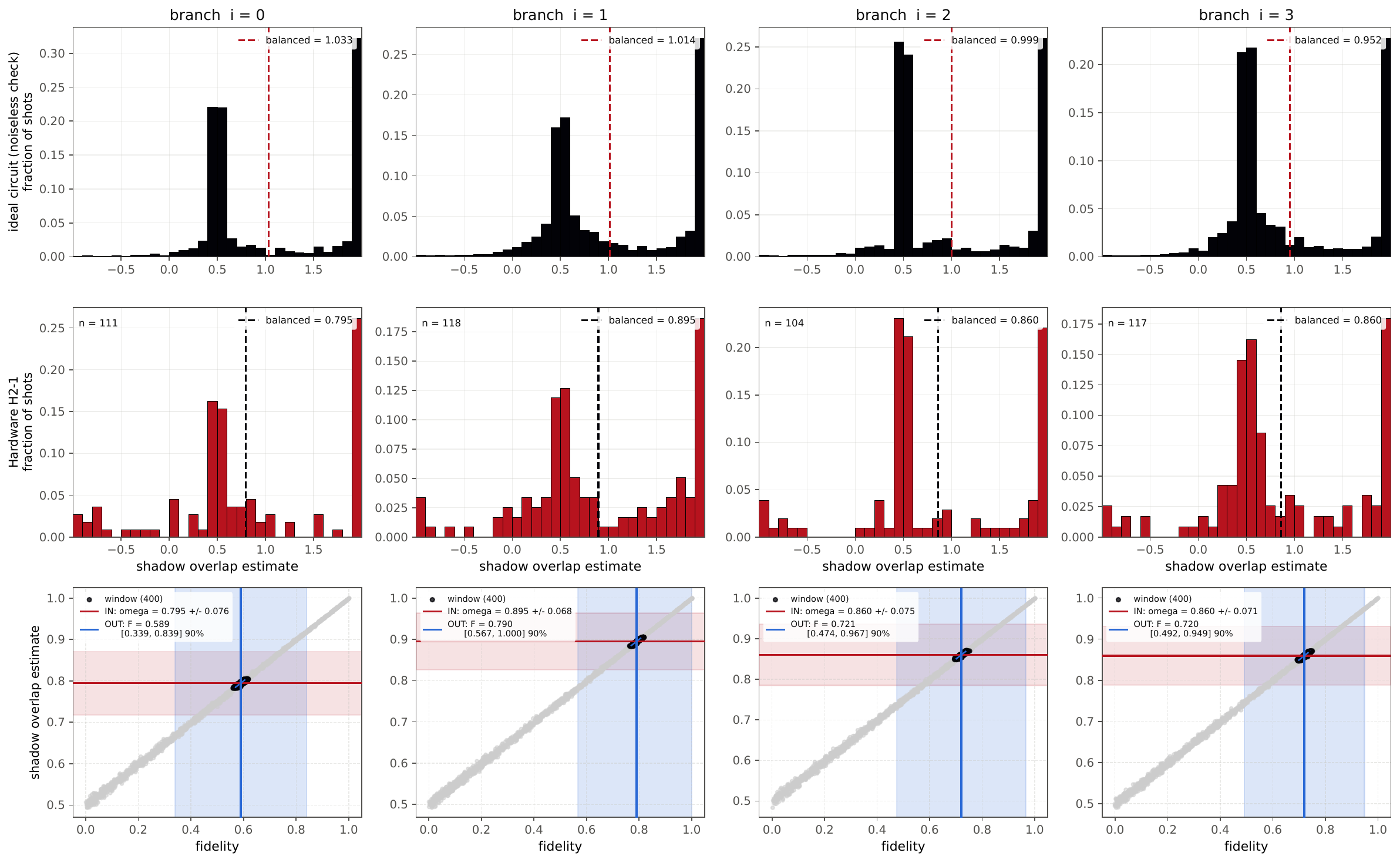}
    \caption{Shadow overlap results for the four $\ket{\psi_i}$ branches for the upsampled ultrasound data: noiseless circuit simulation vs.\ Quantinuum hardware with a 53 qubit circuit comprised of 954 1-qubit gates and 478 2-qubit gates, with fidelity estimates via inverse linear regression (Table~\ref{tab:shadow_overlap_52q}).}
    \label{fig:hardware_results_52q}
\end{figure}

\subsection{Scaling for Haar random states}
It is known that the application of random \textit{brickwall} unitaries causes the resultant state to quickly converge to Haar random \cite{Haferkamp2022-jd}. Therefore, the scaling of the success probability of Q-MERGE on Haar random states can serve as an estimate for a lower bound of the success probability on preconditioned states. It can be interpreted as a special case of our method using zero optimisation steps during preconditioning, initialising each brickwall layer with Haar random SU(4) gates. We demonstrate that this success probability scales independently of $N$, the size of each segment $\ket{\psi_i}$, and only slowly decays with $M$, the total number of segments. These numerical scaling relations are shown in Fig. \ref{fig:haarscaling}, where even for the massive $M=10^7$, the success probability remains above $P=5 \times 10^{-2}$. 
\begin{figure}
    \centering
    \includegraphics[width=0.6\linewidth]{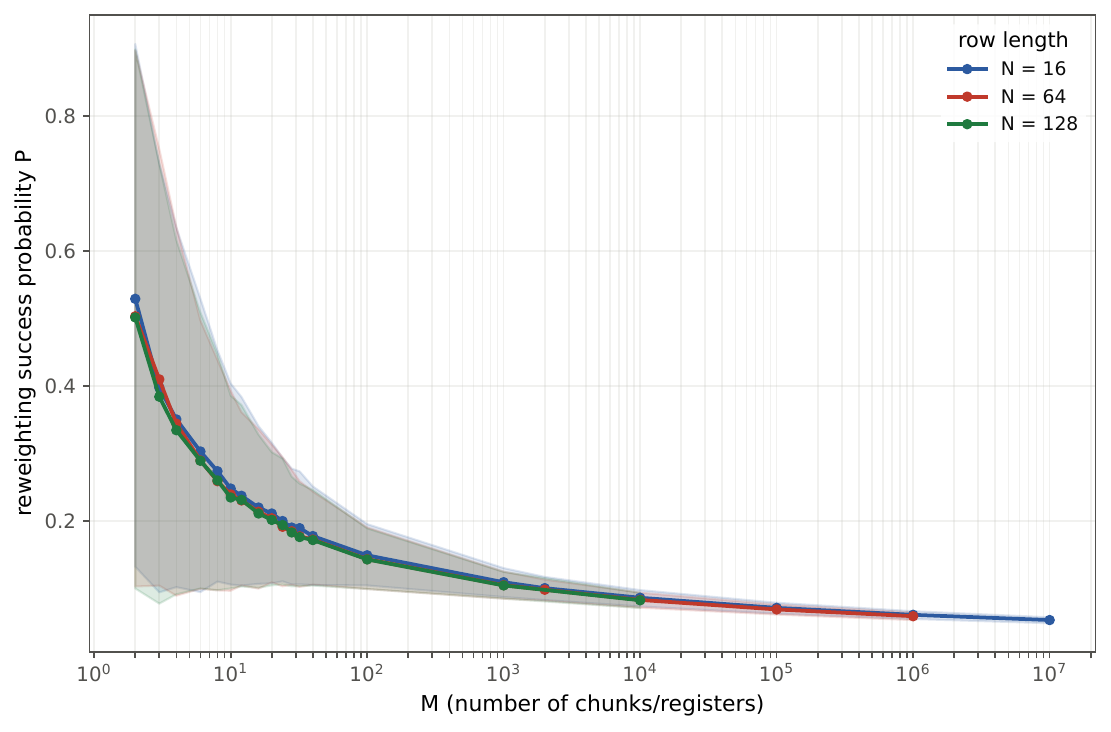}
    \caption{Scaling of the reweighting success probability $P=1/\beta^2$ for increasing $M$, with different $N$ values.}
    \label{fig:haarscaling}
\end{figure}
\section{Conclusion and Future Work}
We have presented Q-MERGE, a two-stage algorithm that provides a scalable approach to the state preparation of massive classical datasets. By partitioning large classical data into smaller states $\ket{\psi_i}$, we avoid the difficulties associated with direct preparation of large classical data, as well as allowing for the preparation of each segment to be done in parallel. Alternatively, each of these classical segments can be loaded in ``on the fly'', of interest when either the quantum hardware supports only a fixed number of qubits, or when the data to be encoded is too large to be stored classically. We believe this method provides a fundamental building block to state preparation methods, with the potential of improving the scalability of all existing state preparation approaches.

The issue of the success probability corresponding to the non-unitary reweighting step was addressed by a classical preconditioning layer on the data. In the case of the ultrasound data we show this is able to boost the success probability from $0.0098$ to $0.3170$ with only 3 layers of \textit{brickwall} SU(4) gates, only inducing a slight fixed cost to each segment's state preparation depth. The method is also demonstrated on the Quantinuum System Model H2 physical qubits, demonstrating encoding a subset of the ultrasound data to a weighted average estimated fidelity of $\tilde F=0.7135\pm0.0720$, which could be further improved by incorporating error correction codes to create logical qubits \cite{Self2022-fn}.

We also numerically show that Q-MERGE is able to scale to a large number of segments $M$, as well as independence of the size of each segment $N$, maintaining reasonable success probabilities for segmenting Haar random states into $M=10^7$ segments. For cases where the classical data is so large that preconditioning is intractable, a randomly generated brickwall circuit can still be used to move the reweighting success probability into the favourable regime of the Haar random states.

Future work into this method would involve investigating how effective the classical preconditioning layer is for different types of datasets, as well as different choices of ansatzes for $V$ and different choices of the loss function for IPR. Different initial states of the storage register could be investigated which in combination with classical preconditioning could further increase the reweighting success probability.

Furthermore, our method hints at a potential paradigm for dedicated state preparation hardware, where a main quantum processor is supported by $\tilde{M}$ purposely designed state preparation registers that exist purely for the purpose of efficient state preparation, while the main quantum processor loads in these states before performing the quantum algorithm. This is shown in Fig. \ref{fig:condensedcircuit}. Each preparation register of $n$ qubits only needs to be internally connected up to the state preparation requirements, to which the SSO or MPS framework may only require nearest-neighbour connectivity. Each preparation register will then only need to perform the same exact CSWAP with the main quantum processor, which the hardware can be specifically designed to do so as to accommodate high fidelity.

\section{Acknowledgements}
We thank Quantinuum for providing hardware access and recognising our team as a finalist in their 2026 SG Grand Challenge. This project was funded through the Department of Defence Advanced Strategic Capabilities Accelerator (ASCA).

\bibliographystyle{apsrev4-2}
\bibliography{references}
\appendix

\section{Diagonal block encoding via controlled rotations}
\label{sec:block encoding}
Let $\beta = \max \{|\alpha_i|:i=0,\cdots,M-1\}$, and define rotation angles
\begin{equation}
  \theta_i = \arccos\left(\frac{|\alpha_i|}{\beta}\right) \in \Bigl[0,\tfrac{\pi}{2}\Bigr],
  \qquad i = 0, \dots, M-1,
  \label{eq:theta}
\end{equation}
which are well-defined since $|\alpha_i|/\beta \in [0,1]$ for every $i$. Apply the SELECT-rotate unitary
\begin{equation}
  U = \sum_{i=0}^{M-1} |i\rangle\langle i|_I \otimes R_i,
  \label{eq:U-reweight}
\end{equation}
where 
\begin{equation}
R_i =
\begin{pmatrix}
\dfrac{\alpha_i}{|\alpha_i|}\cos\theta_i &
-\dfrac{\alpha_i}{|\alpha_i|}\sin\theta_i
\\[1ex]
\sin\theta_i &
\cos\theta_i
\end{pmatrix}.
\end{equation}
Acting $U$ on $|\psi_\textbf{w}\rangle \otimes |0\rangle_Q$:
\begin{align}
  U\bigl(|\psi_\textbf{w}\rangle |0\rangle_Q\bigr)
  &= 
     \sum_{i=0}^{M-1} w_i \, \ket{\psi_i}_S |i\rangle_I
     \Bigl(
       \frac{\alpha_i}{\beta}|0\rangle_Q
       + \sqrt{1 - \frac{|\alpha_i|^2}{\beta^2}}\,|1\rangle_Q
     \Bigr) \nonumber \\
  &= \frac{1}{\beta}
     \sum_{i=0}^{M-1} \sigma_i\ket{\psi_i}_S |i\rangle_I |0\rangle_Q
     \;+\;
     \sum_{i=0}^{M-1}
       w_i\sqrt{1 - \frac{|\alpha_i|^2}{\beta^2}}\;
       \ket{\psi_i}_S |i\rangle_I |1\rangle_Q. 
  \label{eq:after-rotation}
\end{align}
The $|0\rangle_Q$ term has the correct weighting $\sigma_i$ on every branch as
the factors $w_i$ have cancelled exactly, giving us the desired state $\ket{\psi}$ once normalised, and the $|1\rangle_Q$
term collects the orthogonal junk state. Now we measure $Q$ in the computational basis, with postselection on $\ket{0}$ giving probability of success as 
\begin{equation}
    P(\ket{0})=\frac{1}{\beta^2}\sum_{i=0}^{M-1}\sigma_i^2=\frac{1}{\beta^2}
\end{equation}
\section{Minimum IPR and guaranteed reweighting success for phase states}
\label{sec:minimum IPR}

We now show that when the target state $\ket{\psi}$ has perfectly flat amplitudes, known as a \emph{phase state}, the state simultaneously saturates the lower bound of the IPR in Eq.~\eqref{eq:ipr} and yields $\beta = 1$ in Algorithm~\ref{alg:stage2}, i.e.\ the weighting
redistribution step of Stage 2 succeeds with probability exactly $1$.

\subsection{Setup}
As in Sec.~\ref{sec:stage1}, write the target state on the $MN$-dimensional computational basis as
\begin{equation}
  \ket{\psi} = \sum_{i=0}^{M-1}\sum_{k=0}^{N-1} (\psi)_{iN+k} \ket{k}\otimes\ket{i}
  = \sum_{i=0}^{M-1} \sigma_i \ket{\psi_i}\otimes\ket{i},
  \qquad \ket{\psi_i} = \sum_{k=0}^{N-1} (\psi_i)_k \ket{k} \in \mathbb{C}^N,
\end{equation}
with $\sigma_i (\psi_i)_k = (\psi)_{iN+k}$. A phase state is one for which every amplitude has the same magnitude,
\begin{equation}
  \ket{\psi} = \frac{1}{\sqrt{MN}}\sum_{i=0}^{M-1}\sum_{k=0}^{N-1} e^{i\phi_i(k)}\ket{k}\otimes\ket{i},
  \qquad (\psi)_{iN+k} = \frac{1}{\sqrt{MN}}\, e^{i\phi_i(k)},
  \label{eq:phase-state-def}
\end{equation}
for arbitrary real phases $\phi_i(k)$, so that $|(\psi)_{iN+k}| = 1/\sqrt{MN}$ for every $i,k$. Comparing Eq.~\eqref{eq:phase-state-def} to the segment decomposition, each segment itself has flat amplitudes,
\begin{equation}
  \ket{\psi_i} = \frac{1}{\sqrt{N}}\sum_{k=0}^{N-1} e^{i\phi_i(k)}\ket{k},
  \qquad \sigma_i = \frac{1}{\sqrt{M}} \quad \text{for every } i,
  \label{eq:phase-state-segments}
\end{equation}
where the equal weighting $\sigma_i = 1/\sqrt{M}$ follows from normalisation of $\ket{\psi_i}$
together with the requirement $|\sigma_i (\psi_i)_k| = 1/\sqrt{MN}$ for all $k$.

\subsection{Minimum IPR}
Substituting Eq.~\eqref{eq:phase-state-def} into the definition of the IPR, Eq.~\eqref{eq:ipr}, with
$D = MN$,
\begin{equation}
  \mathrm{IPR}(\psi) = \sum_{i=0}^{M-1}\sum_{k=0}^{N-1} \left|(\psi)_{iN+k}\right|^4
  = MN \cdot \left(\frac{1}{\sqrt{MN}}\right)^{4}
  = \frac{1}{MN},
\end{equation}
which saturates the lower bound $1/D$ stated below Eq.~\eqref{eq:ipr}. Thus a phase state is exactly the flattest possible amplitude distribution on $MN$ basis states, in the IPR sense used by the flattening layer $V$ of Sec.~\ref{sec:classical preconditioning}.

\subsection{Guaranteed reweighting success}
Let $\vec u = (u_0,\dots,u_{M-1})$ be any outcome of the Stage 1 measurement (Algorithm~\ref{alg:stage1}). Since every segment in Eq.~\eqref{eq:phase-state-segments} has flat amplitude, $\left|(\psi_k)_{u_k}\right| = 1/\sqrt{N}$ regardless of $k$ or the sampled value $u_k$, so the unnormalised weight
$\tilde w_i = \prod_{k\neq i}(\psi_k)_{u_k}$ has magnitude
\begin{equation}
  \left|\tilde w_i\right| = \prod_{k\neq i} \left|(\psi_k)_{u_k}\right| = N^{-(M-1)/2}
  \qquad \text{for every } i,
  \label{eq:phase-state-wtilde}
\end{equation}
independent of $i$ and of the measurement outcome $\vec u$, so only the phase of $\tilde w_i$ is different between branches. Consequently $\mathcal{N} = \sum_i |\tilde w_i|^2 = M N^{-(M-1)}$, so the normalised weights of Eq.~\eqref{eq:output_after_measure_un} also all have the same magnitude,
\begin{equation}
  |w_i| = \frac{|\tilde w_i|}{\sqrt{\mathcal{N}}} = \frac{1}{\sqrt{M}}
  \qquad \text{for every } i.
  \label{eq:phase-state-w}
\end{equation}
Using $\alpha_i = \sigma_i/w_i$ from Sec.~\ref{sec:classical preconditioning} together with Eqs.~\eqref{eq:phase-state-segments} and \eqref{eq:phase-state-w},
\begin{equation}
  |\alpha_i| = \frac{\sigma_i}{|w_i|} = \frac{1/\sqrt{M}}{1/\sqrt{M}} = 1
  \qquad \text{for every } i,
\end{equation}
so $\beta = \max_i |\alpha_i| = 1$, and the reweighting success probability of Sec.~\ref{sec:classical preconditioning} is
\begin{equation}
  P(\ket{0}_Q) = \frac{1}{\beta^2} = 1.
\end{equation}
That is, for a phase state the reweighting step of Stage 2 succeeds \emph{deterministically}, for every measurement outcome $\vec u$ that can be obtained in Stage 1: no post-selection failure is possible.
\section{Decomposition of SELECT-SWAP gates}
\label{sec:swapdecomp}
A SELECT-SWAP operation can be decomposed into 2 $\mathrm{C}^m\mathrm{NOT}$ gates, and $n$ single qubit controlled CSWAP operations with the addition of a single zeroed ancilla qubit. To make clear the distinction between the types of ancilla qubits, a zeroed ancilla qubit is one where it begins in the $\ket{0}$ state and is restored to $\ket{0}$ afterwards, and a ``borrowed'' ancilla qubit is one which can begin in any state and is restored afterwards. \citeauthor{Claudon2024-ry} \cite{Claudon2024-ry} show that these multi-controlled $\mathrm{C}^m\mathrm{NOT}$ gates can be efficiently decomposed with a circuit depth of $\mathcal{O}(\log{(m)}^3)$ single qubit and CNOT gates and size $\mathcal{O}(m \log{(m)}^4)$, using a single borrowed ancilla qubit from the storage register. This $\mathrm{C}^m\mathrm{NOT}$ acts sequentially on a zeroed qubit from which $n$ single register CSWAP are conditioned on to swap the state into the storage register $S$. The $\mathrm{C}^m\mathrm{NOT}$ operation is undone so that the zeroed ancilla may be reused in the other $\mathrm{C}^m\mathrm{NOT}$ required operations. There exist efficient decompositions of CSWAP gates into a fixed number of single qubit and CNOT gates, leading to a total circuit depth of $\mathcal{O}(M\log{(m)}^3+Mn)$ single qubit and CNOT gates for the whole process on the main registers.
\section{Classical simulation of the compression measurement}
\label{sec:classical-sim}
The numerical results in Sec.~\ref{sec:numerical results} require repeatedly sampling outcomes of the joint measurement on the preparation registers $A_0,\dots,A_{M-1}$. Naively, this requires sampling from the full distribution
\begin{equation}
  P(\vec{u}) = \frac{1}{MN}\sum_{i=0}^{M-1}\prod_{k\neq i}\big|(\psi_k)_{u_k}\big|^2,
  \qquad \vec{u}\in\{0,\dots,N-1\}^M,
  \label{eq:full-dist}
\end{equation}
whose support has size $N^M$, making direct sampling intractable beyond small $M$.

\begin{proposition}\label{prop:mixture}
Define, for each $i \in \{0,\dots,M-1\}$,
\begin{equation}
  Q_i(\vec{u}) = \frac{1}{N}\prod_{k\neq i}\big|(\psi_k)_{u_k}\big|^2.
  \label{eq:Qi}
\end{equation}
Then $Q_i$ is a valid probability distribution over $\vec{u}$, and
\begin{equation}
  P(\vec{u}) = \frac{1}{M}\sum_{i=0}^{M-1} Q_i(\vec{u}).
  \label{eq:mixture}
\end{equation}
\end{proposition}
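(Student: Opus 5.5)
The plan has two parts: show that each $Q_i$ is a probability distribution, and then establish the mixture identity, which is pure algebra.

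\textbf{Nonnegativity.} This is immediate, since each $Q_i(\vec u)$ is $1/N$ times a product of squared moduli.

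\textbf{Normalisation of $Q_i$.} Sum $Q_i(\vec u)$ over all $\vec u\in\{0,\dots,N-1\}^M$. The summand does not depend on the coordinate $u_i$, and the remaining coordinates enter as a product of single-coordinate factors. The sum therefore factorises as
\begin{equation}
\sum_{\vec u} Q_i(\vec u)=\frac{1}{N}\Bigl(\sum_{u_i=0}^{N-1}1\Bigr)\prod_{k\neq i}\Bigl(\sum_{u_k=0}^{N-1}\big|(\psi_k)_{u_k}\big|^2\Bigr)=\frac{1}{N}\cdot N\cdot\prod_{k\neq i}1=1.
\end{equation}
Here each inner sum equals $\lVert\psi_k\rVert^2=1$, because the segments $\ket{\psi_k}$ are normalised by construction. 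The free coordinate contributes the factor $N$ that cancels the prefactor. The only point needing care is exactly this bookkeeping of the unconstrained coordinate $u_i$. Operationally, under $Q_i$ one draws $u_i$ uniformly and each $u_k$ with $k\neq i$ independently from $|(\psi_k)_{u}|^2$; this is what makes sampling tractable.

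\textbf{Mixture identity.} Substitute the definition of $Q_i$ into $\frac{1}{M}\sum_i Q_i(\vec u)$. This gives $\frac{1}{MN}\sum_i\prod_{k\neq i}|(\psi_k)_{u_k}|^2$, which is term by term the expression for $P(\vec u)$ in Eq.~\eqref{eq:full-dist}.

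\textbf{Justifying the form of $P(\vec u)$.} For completeness, I would confirm that Eq.~\eqref{eq:full-dist} really is the Stage 1 outcome distribution. Start from Eq.~\eqref{eq:entangled-state}. Conditioned on index $i$ and on the storage register holding $\ket{\psi_i}$, register $A_i$ holds $\ket{+}^{\otimes n}$, and every other register $A_k$ holds $\ket{\psi_k}$. The $M$ branches are orthogonal because they differ in $\ket{i}_I$, so measurement probabilities add across branches with no interference. Measuring register $A_i$ in state $\ket{+}^{\otimes n}$ gives any outcome with probability $1/N$. This yields the weight $\frac{1}{M}\cdot\frac{1}{N}\prod_{k\neq i}|(\psi_k)_{u_k}|^2$ per branch, which sums to Eq.~\eqref{eq:full-dist}.
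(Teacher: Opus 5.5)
Your proof is correct and follows the paper's argument. The paper gets normalisation by writing $Q_i(\vec{u}) = |\langle \vec{u}|\phi_i\rangle|^2$ with $\ket{\phi_i}=\ket{+}_{A_i}\otimes\bigotimes_{k\neq i}\ket{\psi_k}_{A_k}$ a normalised product state, which is your explicit factorisation of the sum in compact form, and it gets the mixture identity by the same direct comparison. Your extra derivation of $P(\vec{u})$ from Eq.~\eqref{eq:entangled-state} (orthogonal index branches, uniform outcome on $A_i$) is also correct and fills in a step the paper takes as given.
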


\begin{proof}
We have $Q_i(\vec{u}) = |\langle \vec{u} \,|\, \phi_i\rangle|^2$, where
$\ket{\phi_i} = \ket{+}_{A_i}\otimes\bigotimes_{k\neq i}\ket{\psi_k}_{A_k}$
is a tensor product of normalised states, and is therefore itself normalised. Since $\{\ket{\vec{u}}\}$ is an orthonormal basis,
\begin{equation}
  \sum_{\vec{u}} Q_i(\vec{u}) = \langle\phi_i|\phi_i\rangle = 1,
\end{equation}
so $Q_i$ is a valid distribution. The identity \eqref{eq:mixture} follows by direct comparison of \eqref{eq:full-dist} and \eqref{eq:Qi}.
\end{proof}

Proposition~\ref{prop:mixture} shows $P$ is a mixture of $M$ distributions, each of which factorises across registers: register $i$ is uniform on $\{0,\dots,N-1\}$, while every other register $k$ follows the true Born distribution $|(\psi_k)_{u_k}|^2$ of its segment. This yields an efficient classical sampling algorithm:
\begin{enumerate}
  \item Draw $i$ uniformly from $\{0,\dots,M-1\}$.
  \item Draw $u_i$ uniformly from $\{0,\dots,N-1\}$.
  \item For each $k \neq i$, draw $u_k$ independently from the categorical distribution $\{|(\psi_k)_{u_k}|^2\}_{u_k=0}^{N-1}$.
\end{enumerate}
By the law of total probability, the resulting $\vec{u}$ is exactly distributed according to $P$. Precomputing the cumulative distribution of each segment once ($\mathcal{O}(N)$ per segment) allows each categorical draw in step 3 to be performed in $\mathcal{O}(\log N)$ via inverse transform sampling, giving a total cost of $\mathcal{O}(M\log N)$ per sample of the compression measurement outcome, independent of the size $N^M$ of the full outcome space. This procedure was used to generate the measurement outcomes underlying the numerical results in Sec.~\ref{sec:numerical results}.

\section{Shadow Overlap Certification}
\label{sec:shadow overlap}
\textcite{Huang2024-jo} propose a method of certifying the fidelity of an unknown $n$-qubit state $\rho$ with a target state $\ket{\psi}$. Using only single-qubit measurements on $\rho$, a quantity known as the shadow overlap $\omega$ is computed for each measurement taken on the circuit. This is done by first uniformly selecting a random $k \in \{0,..,n-1\}$. All qubits are then measured in the $Z$-basis (computational basis) apart from the $k$'th qubit which is measured with equal probability in either the $X,$ $Y,$ or $Z$ Pauli basis. The measurement results of all but the $k$'th qubit are stored as a bitstring $z \in \{0,1\}^{n-1}$, and the measured state of the $k$'th qubit in its respective basis is stored as $\ket{s}$. We then classically reconstruct a model of the target state's $k$'th qubit, conditional on the measurement $z$, as
\begin{equation}
\ket{\Psi_{k,z}}:= \frac{1}{\sqrt{|\Psi(z^{(0)})|^2+|\Psi(z^{(1)})|^2}}\left[\Psi(z^{(0)})\ket{0}+\Psi(z^{(1)})\ket{1}\right]
\end{equation}
where $z^{(a)}$ is the binary string that equals $a \in \{0,1\}$ on the $k$'th qubit and $z \in \{0,1\}^{(n-1)}$ in the remaining $n-1$ qubits. Here $\Psi(x)=\braket{x}{\psi}$ denotes the amplitude of the target state on the computational basis state $\ket{x}$, $x\in\{0,1\}^n$, so $\ket{\Psi_{k,z}}$ is the target's $k$'th qubit conditioned on the other $n-1$ qubits being found in the bitstring $z$. The shadow overlap is then computed using the observed state $\ket{s}$ of the $k$'th qubit as 
\begin{equation}
\omega := \langle \Psi_{k,z} \rvert \, (3 \lvert s\rangle\langle s\rvert - \mathbb{I}) \, \lvert \Psi_{k,z} \rangle.
\end{equation}
In practice, $M$ multiple measurements of a circuit are made to compute many shadow overlaps generating an estimate $\bar{\omega}=\frac{1}{M}\sum_{i}\omega_i$ of the true value $ \mathbb{E}[\omega] \approx\bar{\omega} $. This shadow overlap, $\mathbb{E}[\omega]$, acts as a surrogate for the fidelity.

\section{Inverse regression for fidelity estimation}
\label{sec:inverse-regression}

For each segment we generate $T=3000$ calibration states of prescribed fidelity
$f\sim\mathrm{Unif}[0,1]$,
\begin{equation}
  \ket{\chi}=\sqrt{f}\,\ket{\psi_i}+\sqrt{1-f}\,\ket{\nu},
  \qquad \braket{\psi_i}{\nu}=0,
\end{equation}
where $\ket{\nu}$ is a complex Gaussian vector projected onto the orthogonal
complement and renormalised, so $\ket{\chi}$ is automatically normalised and
$F=|\braket{\chi}{\psi_i}|^{2}=f$ exactly. For each of these states the exact value of the expected shadow overlap $\mathbb{E}(\omega)$ can be computed exactly due to the small number $n$ of qubits on the storage and preparation registers. The resultant discrepancy between the shadow overlap values and the true fidelity is purely emergent from the fact that different states of the same shadow overlap will result in a range of fidelities which the calibration curve is used to capture. Over the calibration states the Pearson correlation between $\omega$ and $F$ is $r = 0.9996$.

After generating an estimate of the shadow overlap $\omega^\star$ from the hardware, the closest 400 points to $\omega^\star$, measured by $|\omega^\star-\omega|$ are fit by ordinary least squares to
\begin{equation}
  F=a+b\,(\omega-\omega^\star)+\varepsilon,
  \qquad \tilde F=\hat a,
\end{equation}
providing a local linear estimate of the shadow overlap to fidelity relation.

The standard error $s_\omega$ around the estimate $\omega^\star$ combines with the uncertainty in the range of fidelities corresponding to a single fixed $\omega$. As the residuals from the local linear fit are skewed, bootstrapping is used to produce a confidence interval on $F$. We draw $2\times10^{5}$ samples $\tilde\omega^{(t)}\sim\mathcal{N}(\omega^\star,s_\omega)$, resample a residual
$\varepsilon^{(t)}$ from $\{r_j\}$ with replacement, and evaluate
\begin{equation}
  \tilde F^{(t)}=\hat a+\hat b\,(\tilde\omega^{(t)}-\omega^\star)+\varepsilon^{(t)},
\end{equation}
clipped to $[0,1]$. The reported interval is the corresponding empirical
percentile range of $\{\tilde F^{(t)}\}$.
\end{document}